\documentclass[11pt]{article}

  % We've to distinguish between latex and pdflatex
  \usepackage{ifpdf}

  \makeatletter
  \let\@font@warningori\@font@warning
  \newcommand\shutup{\def\@font@warning##1{}}
  \newcommand\youcanspeak{\let\@font@warning\@font@warningori}
  \makeatother

  % Special symbols, etc. (before font settings!!!)
  \usepackage{amsmath,amssymb,amsbsy,bm,latexsym,enumerate,mathrsfs}

  % Font settings: for 'fourier'
  \shutup
  \usepackage{fourier}
  \youcanspeak
  \usepackage[scaled=0.875]{helvet}
  
  % if you don't have/like `fourier' fonts, use, for instance, `times'
  %\usepackage{mathptmx}
  %\usepackage[scaled=0.92]{helvet}
  %\usepackage{courier}

  % Encoding settings
  \usepackage[latin1]{inputenc}
  \usepackage[german,dutch,english]{babel}

% a number of foreign languages...
% good to have for correct hyphenation (e.g. in bibliography)

% Other AMS Math packages (those which don't affect fonts)
\usepackage{amsthm}

% Graphics
\usepackage{graphicx,epsfig,color}
\usepackage{tikz}

% Page layout: margins, rules, footnotes, etc.
\usepackage{anysize}

\usepackage{subfigure}

\providecommand{\norm}[1]{\lVert#1\rVert}

\providecommand{\classNP}{\mathrm{NP}}

\usepackage[textsize=footnotesize]{todonotes}

% My custom commands

\newcommand {\R}	  {\mathbb{R}}

\renewcommand{\epsilon}{\varepsilon}

\renewcommand{\leq}{\leqslant}
\renewcommand{\geq}{\geqslant}

\newcommand{\M}{\mathcal{M}}
\newcommand{\I}{\mathcal{I}}
\newcommand{\loss}{\mathrm{loss}}

\theoremstyle{plain}
\newtheorem{theorem}{Theorem}
\newtheorem{lemma}[theorem]{Lemma}

\theoremstyle{definition}

\newtheorem{remark}{Remark}

\title{Max-sum diversity via convex programming} 
\author{
  Alfonso Cevallos\footnote{\'Ecole Polytechnique Fédérale de Lausanne (EPFL), Switzerland. \texttt{alfonso.cevallosmanzano@epfl.ch}} \quad {\small }  \quad  Friedrich Eisenbrand\footnote{\'Ecole Polytechnique Fédérale de Lausanne (EPFL), Switzerland. \texttt{friedrich.eisenbrand@epfl.ch}} \quad {\small } \quad Rico Zenklusen\footnote{Swiss Federal Institute of Technology in Zurich (ETH Zurich), Switzerland. \texttt{ricoz@math.ethz.ch}} \\[.1cm]
}\date{\today }

\providecommand{\OPT}{\mathrm{OPT}}
\providecommand{\MSD}{\mathrm{MSD}}

\providecommand{\M}{\mathscr{M}}
\providecommand{\I}{\mathscr{I}}

\begin{document}
\maketitle

\begin{abstract}
  \noindent
  \emph{Diversity maximization} is an important concept in 
  information retrieval, computational geometry and operations research. Usually, 
  it is a variant of the following problem:  Given a ground set,
  constraints, and a function $f(\cdot)$ that measures  diversity of a subset, the task is to
  select a feasible subset $S$ such that $f(S)$ is maximized. 
  The \emph{sum-dispersion} function
  $f(S) = \sum_{x,y \in S} d(x,y)$, 
  which is the sum of the pairwise distances in $S$,
  is in this context a  prominent diversification measure.  The
  corresponding diversity maximization is the \emph{max-sum} or
  \emph{sum-sum diversification}.  Many
  recent results deal with the design of constant-factor approximation
  algorithms of diversification problems involving sum-dispersion
  function under a matroid constraint. 

  In this paper, we present a PTAS for the max-sum diversification
  problem under a matroid constraint for distances $d(\cdot,\cdot)$
  of \emph{negative type}. Distances of negative type are, for
  example, metric distances stemming from the $\ell_2$
  and $\ell_1$
  norm, as well as the cosine or spherical, or Jaccard distance which
  are popular similarity metrics in web and image search.

  Our algorithm is based on techniques developed in geometric
  algorithms like metric embeddings and convex optimization. We show
  that one can compute a fractional solution of the usually non-convex
  relaxation of the problem which yields an upper bound on the optimum
  integer solution. Starting from this fractional solution, we employ
  a deterministic rounding approach which only incurs a small loss
  in terms of objective, thus leading to a PTAS.
  This technique can be applied to other previously studied variants
  of the max-sum dispersion function, including combinations of
  diversity with linear-score maximization,
  %or packing and matroid constraints,
  improving the previous constant-factor approximation
  algorithms.

% is not concave, our result is based on convex
%  programming. We show that one can, in polynomial time, compute a
%  feasible fractional point in the linear relaxation whose diversity
%  is an upper bound on the optimal value. Our main result follows then
%  from randomized rounding.
  
%  This technique can be applied to other variants of the max-sum
%  diversification problem including max-sum diversification over a
%  matroid constraint. In the latter case, our algorithm finds a
%  $(1+\epsilon)$-approximation
%  if $OPT = \Omega(k/\epsilon)$,
 % where $k$ is the cardinality of a basis of the matroid.

\end{abstract}

%\category{G.2}{Mathematics of Computing}{Discrete Mathematics}
%\keywords{Diameter of polyhedra, polyhedral graph, totally unimodular matrices, isoperimetric inequality}

\section{Introduction}
\label{sec:bound}

Diversification is an  important concept   in
many areas of computing such as information retrieval, computational geometry or optimization.  
When searching for news on a particular subject, for example, one is
usually confronted with several relevant search results. A news-reader
might be interested in news from various sources and viewpoints. Thus,
on the one hand, the articles should be relevant to his search and, on
the other hand, should be significantly diverse.

% How can this diversity be formalized? Gollapudi and Sharmar~\cite{gollapudi2009axiomatic} described several axioms that a diversity measure should satisfy and concluded that not all of those can be satisfied simultaneously. However, they suggested a measure of diversity which has many desirable features.

The so-called \emph{max-sum diversification} or \emph{max-sum dispersion} is a  diversity-measure  that has been subject of study in operations  research~\cite{hassin1997approximation,ravi1994heuristic,birnbaum2009improved,baur2001approximation} for a while  and it  is currently receiving  considerable attention in the information retrieval  literature~\cite{gollapudi2009axiomatic,bhattacharya2011consideration,borodin2012max}. It is readily described.  
Given a ground set $X$ together with a  distance function $d: X \times X \rightarrow \R_{\geq 0}$. The diversity, or dispersion, of a subset $S \subseteq X$ is  the sum of the pairwise distances 
\begin{displaymath}
  f(S) = \sum_{i,j \in S} d(i,j).  
\end{displaymath}

Since documents are often represented as vectors in a high-dimensional space,   their similarity is measured by norms in $\R^n$ and their induced distances, see, e.g., \cite{manning2008introduction, salton1986introduction}. Among the most frequent  distances are the ones induced by the $\ell_1$ and $\ell_2$ norm,  the \emph{cosine-distance} or the \emph{Jaccard distance}~\cite{Pekalska:2005:DRP:1197035}.  These norms are also used to measure similarity via much lower-dimensional bit-vectors stemming from sketching techniques~\cite{charikar2002similarity}. So, usually, the ground set $X$ is a finite set of vectors in $\R^d$ and $d(\cdot,\cdot)$ is a metric on $\R^d$ which makes diversity maximization  a \emph{geometric optimization problem}.

\bigskip 
\noindent 
Before we go on, we state the general version  of 
the  \emph{maximum sum diversification} or \emph{maximum sum dispersion} ($\MSD$) problem which generalizes many previously studied variants. It is in the focus of this paper, and is described by the following  quadratic integer programming problem:
\begin{equation}
  \label{eq:1}
  \begin{array}{ll}
    \text{maximize} &   x^T D x\\
    \\
    \text{subject to} & A x \leq b \\
    \\
                      & x_i \in \{0,1\} \, \text{ for }\,i \in [n],
  \end{array}
\end{equation}
where the symmetric matrix $D \in \R^{n\times n}$ represents  the distances between the points of a \emph{distance space} and  $Ax \leq b$ is a set of additional linear constraints where $A \in \R^{m\times n}$ and $b \in \R^m$.  
If $x \in \{0,1\}^n$ is the characteristic vector of the set $S \subseteq X$, then $x^TDx =  \sum_{i,j \in S} d(i,j)$. 
We recall the notion of a  distance space, see, e.g., \cite{DezaLaurent97}. It is a pair $(X,d)$
where $X$
is a finite set and $d(\cdot,\cdot)$
is the \emph{distance} function
$d:X\times X\rightarrow \R_{\geq 0}$. 
The function $d$
satisfies $d({i,i})=0$
and $d({i,j})=d({j,i})$
for all $i,j\in X$.
If in addition $d$
satisfies the triangle inequality $d({i,j})\leq d({i,k})+d({j,k})$
for all $i,j,k\in X$,
then $d$ is 
a \emph{(semi) metric} and $(X,d)$ a finite \emph{metric space}.

%\subsubsection*{Previous results}
%\label{sec:probl-stat-relat}

\bigskip 
\noindent 
We now mention some previous algorithmic work on diversity maximization with this objective function. 
For the case where $Ax \leq b$
represents one cardinality constraint, $ \sum_{i=1}^n x_i \leq k, $ and $d(\cdot,\cdot)$  is a metric,   the problem is also coined $\MSD_k$~\cite{borodin2012max}. 
Constant factor approximation algorithms for $\MSD_k$ have been developed in~\cite{ravi1994heuristic,hassin1997approximation}. Birnbaum and Goldman~\cite{birnbaum2009improved} presented an algorithm with approximation factor converging to $1/2$.  This is tight under the assumption that the \emph{planted clique} problem~\cite{alon2011inapproximability} is hard, see~\cite{borodin2012max}.  Baur and Fekete~\cite{baur2001approximation} have shown that this problem has a PTAS for $X \subset \R^d$ and $d(\cdot,\cdot)$ being the $\ell_1$-distance, provided that the dimension $d$ is \emph{fixed}. Bhattacharya et al.~\cite{bhattacharya2011consideration} developed a $1/2$-approximation algorithm for $\MSD_k$ where the objective function is replaced by $x^TDx + c^Tx$ for some $c \in \R^n$. This has been useful in accommodating also \emph{scores of documents} in the  objective function. 

Recently, Abbassi et al.~\cite{abbassi2013diversity} have shown that $\MSD$ has a $1/2$-approximation algorithm if $d(\cdot,\cdot)$  is a metric and $Ax \leq b$ models the independent sets of a \emph{matroid}. 
 This is particularly relevant in situations where  documents are partitioned into subsets $D_1,\dots,D_\ell$ and only $p_i$ results should be returned from partition $D_i$ for each $i$. The possible sets are then \emph{independent sets} of a \emph{partition matroid}. The case of one cardinality constraint only is subsumed by $\ell=1$. Thus, the tightness of their result also follows from the planted clique assumption as described in~\cite{borodin2012max}.

\subsubsection*{Contributions of this paper}
\label{sec:main-contr-this}

The $\frac{1}{2}+\epsilon$ 
hardness of $\MSD_k$~\cite{borodin2012max}
is based on a metric that does not play a prominent role as a
similarity measure. Are there better approximation algorithms,
possibly polynomial time approximation schemes, for other relevant
distance metrics?

We give a positive answer to this question for
the case where $d(\cdot,\cdot)$
is a distance of \emph{negative type}. We review the notion of
negative-type distances in Section~\ref{sec:theory-embeddability} and
here only note that the previously mentioned distances, like the ones
stemming from the $\ell_2$
and $\ell_1$-norm,
as well as the \emph{cosine-distance} and the \emph{Jaccard distance}
are, among many other relevant distance functions, of negative
type. Our main result is the following. 
\begin{theorem}
\label{thr:5}  
  There exists a polynomial time approximation scheme for $\MSD$
  for the case that $d(\cdot,\cdot)$
  is of negative type and $Ax \leq b$
  is a matroid constraint. In particular, there is a polynomial time
  approximation scheme for  $\MSD_k$ for negative type distances.
\end{theorem}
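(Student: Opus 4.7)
The plan is to exploit the negative-type structure to convexify the usually non-convex objective $x^{T} D x$, solve a concave maximization on the matroid base polytope, and then round the fractional optimum deterministically with only an $\epsilon$-fraction loss.

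\textbf{Step 1 (Euclidean embedding).} Invoke Schoenberg's theorem: $d$ is of negative type if and only if there exist vectors $v_1,\dots,v_n \in \R^n$ with $d(i,j) = \|v_i - v_j\|^2$, and such vectors are computable in polynomial time from $D$ via a Cholesky/eigendecomposition. Then the identity
\[
  \sum_{i,j \in S} \|v_i - v_j\|^2 \;=\; 2|S|\sum_{i \in S}\|v_i\|^2 \;-\; 2\,\Bigl\|\sum_{i\in S} v_i\Bigr\|^2
\]
rewrites the objective in a form amenable to convex analysis.

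\textbf{Step 2 (Convex relaxation on the base polytope).} Since $d \geq 0$, adding elements to $S$ never decreases the objective, so an optimal feasible set is always a basis of the matroid $\M$; fix $r := \op{rank}(\M)$. On the matroid base polytope $P$, the relaxed objective
\[
  f(x) \;=\; 2r \sum_i \|v_i\|^2 x_i \;-\; 2\,\Bigl\|\sum_i v_i x_i\Bigr\|^2
\]
is concave: the first term is linear, and the second is the negative of a convex quadratic in $x$. Hence $\max\{f(x) : x \in P\}$ is a polynomial-time solvable convex program (via the ellipsoid method with the standard polynomial separation oracle for $P$). Because $f$ agrees with $x^T D x$ on $P$, the optimum $f(x^*)$ is an upper bound on $\MSD$.

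\textbf{Step 3 (Deterministic rounding).} Convert $x^*$ into an integer basis $S$ with $f(\mathbf{1}_S) \geq (1-\epsilon)\,f(x^*)$ via a pipage-style exchange procedure. Along any matroid-exchange direction $e_i - e_j$ that preserves $P$, $f$ is a univariate concave quadratic of curvature $-2\|v_i-v_j\|^2 = -2\,d(i,j)$; pushing to the better feasible endpoint thus loses at most $O(d(i,j))$. After $O(n)$ exchanges, the total loss is bounded by $O\bigl(n\cdot\max_{i,j} d(i,j)\bigr)$. To promote this to a $(1-\epsilon)$-guarantee, precede the rounding by brute-force enumeration of an $O(1/\epsilon)$-sized set of ``heavy'' elements in the optimum; after fixing them via matroid contraction, each remaining element contributes only an $O(\epsilon)$-fraction of $f(x^*)$, and the accumulated rounding loss can be charged against the fractional optimum.

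\textbf{Main obstacle.} The delicate point is Step~3. Unlike classical pipage rounding for convex or submodular objectives, here $f$ is \emph{strictly concave} along every exchange direction, so each rounding step genuinely decreases the value and there is no free progress toward an integral extreme point. Turning the scheme into a PTAS therefore requires carefully combining the exchanges with the preliminary enumeration of heavy elements, and arguing that the curvatures $\|v_i-v_j\|^2$ paid along swaps are dominated by pairwise distances already present in $f(x^*)$, so that the cumulative loss is provably at most $\epsilon \cdot \MSD$.
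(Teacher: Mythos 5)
Your Steps 1 and 2 are essentially the paper's convexification: since $d$ is of negative type, $x^TDx$ becomes concave once $\sum_i x_i$ is fixed, an optimal set may be assumed to be a basis by monotonicity, and the resulting concave program over the base polytope is solved by the ellipsoid method with the matroid separation oracle (this is Lemma~\ref{lem:sliceConvex} and Theorem~\ref{thr:2}). The genuine gap is Step 3. Your per-swap loss bound $O(d(i,j))$, accumulated over $O(n)$ exchanges, gives the absolute bound $O(n\cdot\max_{i,j}d(i,j))$, which is not relative to $\OPT$: already for a cardinality constraint of rank $k$ one can have $\OPT=\Theta(k^2\max_{i,j}d(i,j))$, so whenever $k\ll\sqrt{n/\epsilon}$ your bound exceeds $\epsilon\,\OPT$ and certifies nothing. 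The proposed repair---guessing $O(1/\epsilon)$ ``heavy'' elements and contracting them---is not developed and does not obviously work: in a quadratic objective the loss paid in an exchange is a distance $d(i,j)$ between two \emph{fractional} elements, neither of which need appear in the final basis, and no argument is given that these paid distances are dominated by distances already contributing to $f(x^*)$. You flag this yourself as the unresolved obstacle, so the proposal does not yet prove the theorem.

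For comparison, the paper closes exactly this gap without any enumeration of heavy elements. It works on the minimal face of $P(\M)$ containing $x^*$, described by a chain of tight sets whose successive differences (``rings'') partition the support (Lemma~\ref{lemmaChain}); each iteration transfers mass between two fractional coordinates $i,j$ lying in the \emph{same ring} (which keeps the point on the base polytope), and the pair is chosen to minimize $x_ix_jd(i,j)$, so the loss of that iteration is at most $2x_ix_jd(i,j)$. The charging tool is the negative-type inequality of Lemma~\ref{lemmaIneq}: for disjoint $A,B$ the $\ell_1$-normalized dispersion of $x^{A\cup B}$ is at least the sum of the normalized dispersions of $x^A$ and $x^B$. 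Summing this over the rings and using the minimality of the chosen pair gives $\loss^m\leq\min\{2/(mk),\,2/m^2\}\,{x^*}^TDx^*$ when $m$ iterations remain, hence a total relative loss of $O(\log k/k)$; the PTAS then follows because for $\epsilon<c\log k/k$ the rank $k$ is bounded by a constant depending only on $\epsilon$ and brute force applies. Some mechanism of this kind---bounding each swap's loss by a quantity that the fractional objective itself provably dominates, uniformly over the iterations---is what your Step 3 would need.
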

 \noindent 
% This is a significant improvement over the previous $2$-approximation algorithms for general metrics~\cite{birnbaum2009improved,abbassi2013diversity} and the PTAS for the $\ell_1$-norm in fixed dimension~\cite{baur2001approximation}. 
Theorem~\ref{thr:5}   is shown  by  following these two steps.

\begin{enumerate}[a)] 
\item We show that one can compute a fractional solution $x^* \in \R^{n}$ that fulfills all the constraints $Ax \leq b$ and satisfies ${x^*}^TDx^* \geq OPT$, where $OPT$ is the objective-function value of the optimal solution.
More precisely, our algorithm does not optimize over the natural relaxation,
but only over a family of slices of it, one for each possible
$\ell_1$-norm of the solution vector.
The key property we show and exploit is that the optimization problem on
each slice is a convex optimization problem that can be attacked by
standard techniques, despite the fact that the natural relaxation is not convex.
This allows us to  obtain an optimal solution to the relaxation 
via the ellipsoid method for a wide family of constraints $Ax\leq b$,
even if only a separation oracle is given; in particular, this includes
matroid polytopes~\cite{GroetschelLovaszSchrijver88}.  \label{item:4}

%The algorithm runs in polynomial time if the \emph{separation problem}  for the inequalities  $Ax \leq b$ can be solved  in polynomial time, which is the case for the matroid polytope~\cite{GroetschelLovaszSchrijver88}. 
\item 
For the case in which $Ax \leq b$ describes the convex hull of independent sets of a matroid, 
we furthermore describe a polynomial-time \emph{rounding algorithm} that computes an integral feasible solution $\bar{x}$ which satisfies 
$
    \bar{x}^TD\bar{x} \geq \left(1 - c \cdot \frac{\log k}{k}\right){x^*}^TD{x^*}, 
$
for some universal constant $c$ and where $k$ is the \emph{rank} of the matroid. 
\end{enumerate}

\noindent 
% We note that 
% this fractional solution $x^*$ is, in general, not the optimal solution of the relaxation that one obtains by replacing $x_i \in \{0,1\}$  by $0\leq x_i \leq 1$ for $1\leq i \leq n$.
% \todo{RZ: The previous sentence may be a bit misleading if the reader thinks about the matroid polytope for which we indeed solve the natural relaxation.}
% In fact, we do not know whether such an optimal solution of the relaxation can be computed in polynomial time. 
%Still, such an $x^*$ can be efficiently computed by convex optimization techniques via a suitable convexification.
%
Thus, step~\ref{item:4}) is via a suitable convexification 
of a non-convex relaxation. Convexifications  
have proved useful in the design of   approximation algorithms before, see for example \cite{skutella2001convex}. 

We also want to mention that we obtain similar results for the case where  the objective function is a combination of max-sum dispersion and linear scores, a scenario that has been considered in~\cite{bhattacharya2011consideration}.  
Finally, to complement our results, we prove $\classNP$-hardness of $\MSD_k$ for negative type distances. We prove as well that, for the rounding algorithm mentioned in point b), the approximation factor of $1 - O\left(\frac{\log k}{k}\right)$ almost matches the integrality gap of our relaxation, which we can show to be at least $1-\frac{1}{k}$.

\section{Preliminaries}
\label{sec:preliminaries-1}
In this section, we review some preliminaries that are required for the understanding of this paper. A \emph{polynomial time approximation scheme (PTAS)} for an optimization problem in which the objective function is maximized, is an algorithm that, given an instance and an $\epsilon >0$, computes a solution that has an objective function value of at least  $(1-\epsilon) \cdot OPT$, where $OPT$ denotes the the objective function value of the optimal solution.  Clearly, our rounding algorithm  
 is a PTAS since we can compute the optimal solution in the case where $\epsilon < c \cdot \frac{\log k}{k}$ by brute force. 

\subsubsection*{Norms and embeddings }  
\label{sec:theory-embeddability}

Our results rely heavily on the theory of embeddings. We review some
notions  that are relevant for us and refer to
\cite{DezaLaurent97,matouvsekEmbeddings} for a thorough account. For a
vector $v=(v^1,\cdots,v^t)^T\in\R^t$,
we define the $l_p$
norm in the usual way, $\|v\|_\infty:=\max_{1\leq i \leq t} |v^i|$,
and $\|v\|_p:=(\sum_{i=1}^t |v^i|^p)^{1/p}$
for $p\geq 1$,
and we extend this last definition to $0< p< 1$,
even if they are not proper norms as they do not respect the triangle
inequality. For $0< p\leq \infty$,
the space $(X,d)$
is $l_p$-\emph{embeddable}
if there is a dimension $t$
and a function $v:X\rightarrow\R^t$ 
(the isometric embedding),
such that for all $i,j\in X$
we have $d(i,j)=\|v_j-v_i\|_p$.
Any finite metric
space is $l_\infty$-embeddable
with the \emph{Fr\'echet embedding}~\cite{frechet1910dimensions}
$v_i^j=d(i,j)$ for $i,j\in X$.
 %\footnote{We stress the fact that the corresponding embedding $v$ and dimension $t$ need not be known. For instance, finding an embedding for an $l_2$-embeddable space is polynomial-time solvable, while the respective task for an $l_1$-embeddable space is NP-complete \cite{avis91}. If $([n],d)$ is $l_p$-embeddable, then the minimum necessary dimension $t$ is at most $n-1$ for $p=2$ and $p=\infty$, and at most $\binom{n}{2}$ for all $p\geq 1$ \cite{ball,fichet}.}\\

% Given a distance $d(\cdot,\cdot)$ and a function $f:\R_{\geq 0}\rightarrow \R_{\geq 0}$ with $f(0)=0$, we represent by $f(d)$ the distance defined by $f(d)(i,j)=f(d(i,j))$. Naturally, the property of $l_2$-embeddability will be of central interest to us.

%However, it will be more convenient to work with the concept of \emph{negative %type}: 

%\noindent 
For the remainder of this paper, we assume  that $X = \{1,\dots,n\}$ and $n \geq 2$. Let $b_1,\dots,b_n$ be real coefficients. The inequality 
\begin{align}\label{def:NT}
  \sum_{1 \leq i,j \leq n} b_ib_j x_{ij} \leq 0
\end{align}
with variables $x_{ij}$ is a \emph{negative type} inequality if $\sum_{i=1}^n b_{i} = 0$. The distance space $(X,d)$ is of negative type if $d(\cdot,\cdot)$ satisfies all negative type inequalities, i.e., $\sum_{1 \leq i , j \leq n}  b_ib_j d(i,j) \leq 0$ holds for all $b_1,\dots, b_n \in \R$ with $\sum_{i=1}^n b_i = 0$. Schoenberg~\cite{schoenberg1938metricI,schoenberg1938metricII} characterized the metric spaces that are $\ell_2$-embeddable as those, whose square distance is of negative type. 
 
\begin{theorem}[\cite{schoenberg1938metricI,schoenberg1938metricII}]
\label{thr:1}
  A finite  distance space $(X,d)$ is of negative type if and only if $(X,\sqrt{d})$ is $\ell_2$-embeddable. 
\end{theorem}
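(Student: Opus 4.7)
The plan is to prove the two implications separately, both through elementary algebraic manipulations; the key identity is the familiar expansion $\|u-w\|_2^2 = \|u\|_2^2 + \|w\|_2^2 - 2\langle u,w\rangle$.

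For the easy direction, assume $(X,\sqrt{d})$ is $\ell_2$-embeddable, so there exist vectors $v_1,\dots,v_n \in \R^t$ with $d(i,j) = \|v_i - v_j\|_2^2$. For arbitrary real coefficients $b_1,\dots,b_n$ with $\sum_i b_i = 0$, expanding gives
\begin{displaymath}
  \sum_{i,j} b_i b_j d(i,j) = \sum_{i,j} b_i b_j \bigl(\|v_i\|_2^2 + \|v_j\|_2^2 - 2\langle v_i, v_j\rangle\bigr).
\end{displaymath}
The first two sums vanish because they factor through $\sum_i b_i = 0$, and the remaining term equals $-2\bigl\|\sum_i b_i v_i\bigr\|_2^2 \leq 0$. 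Hence $d$ is of negative type.

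For the nontrivial direction, assume $d$ is of negative type. The plan is to construct a Gram matrix whose factorization yields the desired embedding. Fix the point $n \in X$ and define the symmetric matrix $M \in \R^{(n-1)\times(n-1)}$ by
\begin{displaymath}
  M_{ij} = \tfrac{1}{2}\bigl(d(i,n) + d(j,n) - d(i,j)\bigr), \qquad 1 \leq i,j \leq n-1.
\end{displaymath}
I claim $M$ is positive semidefinite. Given any $c_1,\dots,c_{n-1} \in \R$, set $c_n := -\sum_{i=1}^{n-1} c_i$, so that $\sum_{i=1}^n c_i = 0$. Splitting the double sum at the index $n$ and using $d(n,n) = 0$, a short manipulation (together with the identity $c_n = -\sum_{i<n} c_i$) yields
\begin{displaymath}
  \sum_{i,j=1}^n c_i c_j d(i,j) = -2 \sum_{i,j=1}^{n-1} c_i c_j M_{ij}.
\end{displaymath}
Since the left-hand side is nonpositive by the negative-type assumption, the right-hand side is nonnegative, proving $M \succeq 0$.

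Factor $M = V^{\transpose} V$ with $V \in \R^{t \times (n-1)}$, denote its columns by $v_1,\dots,v_{n-1}$, and set $v_n := 0$. Then $\langle v_i, v_j\rangle = M_{ij}$ for $i,j < n$, $\|v_i\|_2^2 = M_{ii} = d(i,n)$ for $i < n$, and therefore
\begin{displaymath}
  \|v_i - v_j\|_2^2 = M_{ii} + M_{jj} - 2M_{ij} = d(i,n) + d(j,n) - \bigl(d(i,n) + d(j,n) - d(i,j)\bigr) = d(i,j)
\end{displaymath}
for $i,j < n$, while $\|v_i - v_n\|_2^2 = \|v_i\|_2^2 = d(i,n)$. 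This gives the required $\ell_2$-embedding of $(X,\sqrt{d})$.

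The only delicate part is the PSD computation, where one must keep careful track of the index $n$ when expanding the negative-type inequality; everything else is bookkeeping.
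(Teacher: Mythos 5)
Your proof is correct, and it is the standard argument for Schoenberg's criterion: the easy direction by expanding the negative-type form and collapsing it to $-2\bigl\lVert\sum_i b_i v_i\bigr\rVert_2^2$, and the hard direction by fixing a basepoint, forming the Gram-type matrix $M_{ij}=\tfrac{1}{2}\bigl(d(i,n)+d(j,n)-d(i,j)\bigr)$, verifying it is positive semidefinite via the negative-type inequality with $c_n=-\sum_{i<n}c_i$, and factoring it to get the embedding. The paper itself does not prove this theorem --- it cites Schoenberg --- so there is no internal proof to compare against, but your construction is exactly the one the paper implicitly relies on in its Remark following Lemma~\ref{lem:sliceConvex}, where the matrix $Q_{ij}=\tfrac{1}{2}\bigl(d(1,i)+d(1,j)-d(i,j)\bigr)$ plays the same role with basepoint $1$ instead of $n$; your identity linking $\sum_{i,j}c_ic_jd(i,j)$ to $-2\sum_{i,j<n}c_ic_jM_{ij}$ checks out.
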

\noindent 
The following assertions, which help identifying distance spaces of negative type, can be found in~\cite{DezaLaurent97}.  

%\begin{theorem}\label{embedding} Let $(X,d)$ be a finite distance space.
\begin{enumerate}[i)]
\setlength\itemsep{0em}
%\item If $(X,d)$ is $l_2$-embeddable, then it is $l_p$-embeddable for all $1\leq p\leq \infty$. \cite{dor1976potentials} 
\item If $(X,d)$ is a metric space, then $\left(X,d^{\log_2\left(\frac{n}{n-1}\right)}\right)$ is of negative type. \cite{deza1990metric} \label{item:3}
\item For any $0< \alpha\leq p\leq 2,$ if $(X,d)$ is $l_p$-embeddable, then $(X,d^\alpha)$ is of negative type. \cite{schoenberg1938metricII} \label{item:1}
\item If $(X,d)$ is of negative type, then $(X,f(d))$ is also of negative type for any of the following functions: $f(x)=\frac{x}{1+x}$, $f(x) = \ln(1+x)$, $f(x) = 1-e^{-\lambda x}$ for $\lambda>0$, and $f(x) = x^\alpha$ for $0\leq \alpha \leq 1$. \cite{schoenberg1938metricI} \label{item:2}
\end{enumerate}

We  now list some distance functions that are of negative type and which are often used in information retrieval and web search. 
The $\ell_1$-metric is of negative type. This follows from the assertion~\ref{item:1})  above with $\alpha = p = 1$.  
In fact, any $\ell_p$-metric with $1 \leq p \leq 2$ is of negative type. 
The $\ell_1$-metric is a prominent similarity measure in information retrieval~\cite{manning2008introduction} in particular when using sketching techniques~\cite{lv2004image} where data points are represented by small-dimensional bit-vectors whose Hamming-distance approximates the distance of the corresponding points. 
%\item 
Also the \emph{cosine distance}, which measures the distance of two points on the sphere $S^{(t-1)}$ by the angle that they enclose, is of negative type. This follows from a result of Blumenthal~\cite{blumenthal1953theory}, see also~\cite{DezaLaurent97}.  
%\item 
For subsets $A,B$ of a finite ground set $U$, the \emph{Jaccard distance} $d(A,B)=1 - \frac{|A \cap B| }{ |A \cup B|} $ is of negative type~\cite{gower1986metric}. Similarly, many other distances of sets are of negative type, such as \emph{Simple Matching} $\frac{|A\bigtriangleup B|}{|U|}$, \emph{Russell and Rao} $1-\frac{|A \cap B|}{|U|}$, and \emph{Dice} $\frac{|A \bigtriangleup B|}{|A| + |B|}$ distances (see~\cite[Table 5.1]{Pekalska:2005:DRP:1197035} for a more complete list). Assertion~\ref{item:2}) above presents some examples of transformations of distance spaces that preserve this property, and thus permit to construct new spaces of negative type from existing ones.

It is important to remark that distance spaces of negative type are in general not metric, or vice-versa. Hence results for these two families of distance spaces are not directly comparable. For instance, the Dice distance mentioned before is not metric. 
%\end{enumerate}

\subsubsection*{Matroids and the matroid polytope}
We mention some basic definitions and results on matroid theory, see, e.g.,~\cite[Volume B]{Schrijver03} for a thorough account. 
A matroid $\M$
over a finite  ground set $X$
is a tuple $\M = (X,\I)$,
where $\I \subseteq 2^X$
is a family of \emph{independent} sets with the following properties. 
\begin{itemize}
\setlength\itemsep{0em}
\item[(M1)]  $\emptyset\in\I$. 
\item[(M2)] If $A\subseteq B$ and $B \in \I$, then  $ A\in \I$. 
\item[(M3)] If $A,B\in \I$ and  $|A|>|B|$, then there exists an element  $e\in A\setminus B$ such that $B\cup \{e\} \in \I$.  
\end{itemize}

In particular, the family of all
subsets of $X$ of cardinality at most $k$, i.e., $\I = \{ S\subseteq X \colon \mid S \mid \leq k\}$, forms a matroid known as \emph{uniform matroid of rank $k$}, often denoted by $U_n^k$.   Thus, $\MSD_k$ can be understood as picking an independent set $S \in \I$ from the uniform matroid that maximizes the sum of the pairwise distances.

The \emph{rank} $r(A)$
of $A \subseteq X$
is the maximum cardinality of an independent set contained in $A$.
Any inclusion-wise maximal independent set $B$
is called a \emph{basis}, and a direct consequence of the definition
of matroids is that all bases have the same cardinality $r(X)$,
called the \emph{rank of the matroid}.

%For any set $A\subseteq X$, define its characteristic vector $\chi^A\in \R^n$ as $\chi^A_i=1$ if $i\in A$, and $0$ otherwise. 
%And for a vector $x\in \R^n$ and a set $A\subseteq X$. %, define the restricted %vector $x^A$ as $x^A_i=x_i \cdot \chi^A_i$. 
The \emph{matroid polytope} $P(\M)$ is the convex hull of the characteristic vectors of the independents sets of the matroid $\M$. 
%The matroid base polytope 
%is the convex hull of the characteristic vector of all the bases of
%$\M$,
%and can also be characterized by the constraints imposed by the set
%ranks; i.e.
It can be described by the following inequalities $P(\M)=\{x\in \R^n_{\geq 0} \colon \sum_{i \in A} x_i \leq r(A) \ \forall
A\subseteq X \}$. The base polytope of a matroid $\M$ of rank $k$ is the convex hull of all characteristic vectors of bases of $\M$, and is given by $P(\M)\cap \{x\in \R^n \colon \sum_{i=1}^n x_i = k\}$.

\subsubsection*{Convex quadratic programming}
\label{sec:conv-quadr-progr}

A \emph{quadratic program} is an optimization problem of the form 
\begin{displaymath}
  \min \{ x^T Q x + c^T x \colon x \in \R^n, \, Ax \leq b\}, 
\end{displaymath}
where $Q \in \R^{n \times n}$, $c \in \R^n$, $A \in \R^{m\times n}$ and $b \in \R^m$. If $Q$ is positive semidefinite, then it is a \emph{convex quadratic program}. Convex quadratic programs can be solved in polynomial time with the ellipsoid method~\cite{Khachiyan79}, see, e.g.,~\cite{kozlov1980polynomial}.
This also holds if $Ax \leq b$ is not explicitly given but the \emph{separation problem} for $Ax \leq b$ can be solved in polynomial time~\cite{GroetschelLovaszSchrijver88}.
In this case the running time is polynomial in the input size of $D$ and the largest
binary encoding length of the coefficients in $A$ and numbers in $b$.
The separation problem for the matroid polytope $P(\M)$ can be solved in polynomial time, provided that one can efficiently decide whether a set $S \subseteq X$ is an independent set, 
see~\cite{GroetschelLovaszSchrijver88}. Moreover,
the largest encoding length of the numbers in the above-mentioned
description of the matroid polytope is $O(\log n)$.
Thus a convex quadratic  program over the matroid polytope can be
solved in polynomial time.

\section{A  relaxation that can be solved by convex  programming}
\label{sec:relax-solv-conv} 

We now describe how to efficiently compute a fractional point $x^*$ for the relaxation of \eqref{eq:1} with an objective value ${x^*}^TDx^* \geq OPT$. 
The function $f: \R^n \rightarrow \R$, $f(x) = x^T D x$ is in general non-concave, even if the distances $d(i,j)$ are  $\ell_2$-embeddable or of negative type. % The following lemma  allows us to compute a fractional point $x^*$  in the relaxation with convex quadratic programming which is, although not necessarily the optimal solution of the relaxation, satisfying ${x^*}^TDx^* \geq OPT$ which makes it suitable for rounding algorithms.  
However, we have the following useful observation.

\begin{lemma}
  \label{lem:sliceConvex}
  Let $(X,d)$ be a finite distance space of negative type, then 
  \begin{displaymath}
    f(x) =  x^T D x
  \end{displaymath}
  is a concave function over the domain  $ \{ x \in \R^n \colon \sum_{i=1}^n x_i = \alpha\}$  for each fixed $\alpha \in \R$.  
\end{lemma}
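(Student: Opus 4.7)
The plan is to reduce concavity on the affine slice to a semidefiniteness statement about the matrix $D$ restricted to the linear subspace $V = \{v \in \R^n : \sum_{i=1}^n v_i = 0\}$, which is exactly the tangent space to the slice $\{x : \sum_i x_i = \alpha\}$. Since $f(x) = x^T D x$ is a smooth quadratic with Hessian $2D$, concavity on the affine slice is equivalent to $v^T D v \leq 0$ for all $v \in V$.

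More concretely, I would argue as follows. Take two points $x, y$ in the slice and $\lambda \in [0,1]$, and set $v = y - x$. Since $\sum_i x_i = \sum_i y_i = \alpha$, the displacement vector $v$ lies in $V$. A direct expansion gives
\begin{equation*}
  \lambda f(x) + (1-\lambda) f(y) - f(\lambda x + (1-\lambda)y) = \lambda(1-\lambda)\, v^T D v,
\end{equation*}
so the concavity of $f$ restricted to the slice is equivalent to $v^T D v \leq 0$ for every $v \in V$.

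Now I would invoke the definition of a negative type distance directly. Writing out $v^T D v = \sum_{1 \leq i,j \leq n} v_i v_j d(i,j)$, the hypothesis $\sum_i v_i = 0$ together with the negative type inequality~\eqref{def:NT} (applied with coefficients $b_i = v_i$) yields $v^T D v \leq 0$. This is the entire content of the lemma.

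I do not anticipate a real obstacle: the only step that requires any care is confirming that $D_{ii} = 0$ is consistent with the identity above (which it is, since $d(i,i) = 0$ by the definition of a distance space) and that the quadratic form induced by $D$ coincides with $\sum_{i,j} v_i v_j d(i,j)$ summed over \emph{all} ordered pairs, matching the form appearing in the negative type inequality. Once these bookkeeping points are stated, the result is immediate from the definition of negative type.
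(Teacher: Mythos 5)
Your proof is correct, and it takes a genuinely different (and more elementary) route than the paper. You reduce concavity on the slice to the single inequality $v^T D v \leq 0$ for displacement vectors $v$ with $\sum_i v_i = 0$ via the exact identity $\lambda f(x) + (1-\lambda)f(y) - f(\lambda x + (1-\lambda)y) = \lambda(1-\lambda)\,v^T D v$, and then observe that this inequality is literally the negative-type condition \eqref{def:NT} with $b_i = v_i$ (the paper's definition indeed allows arbitrary real coefficients summing to zero, so this is a direct application, not just an application to rational or integer coefficients). The paper instead invokes Schoenberg's characterization (Theorem~\ref{thr:1}) to obtain an $\ell_2$-embedding of $\sqrt{d}$ and rewrites $x^T D x$ on the slice as $2\alpha c^T x - 2x^T Q x$ with $Q$ a Gram matrix, hence PSD. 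Your argument buys brevity and shows the lemma is essentially a restatement of the definition of negative type; the paper's argument buys the explicit representation $(Q, c)$, which is not just a proof device but is reused in Theorem~\ref{thr:2}, where the sliced problem is handed to the ellipsoid method as a convex \emph{quadratic} program with objective $2x^TQx - 2\alpha c^T x$, and in the subsequent remark showing $Q$ and $c$ can be computed directly from the distances. So your proof fully establishes the lemma as stated, but if one wanted to substitute it into the paper, the explicit convex reformulation would still have to be supplied separately for the algorithmic step.
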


\begin{proof}
  By  Theorem~\ref{thr:1}, the distance $\sqrt{d}$ is $\ell_2$-embeddable; let  $v_1,\dots v_n \in \R^t$ be a corresponding embedding, i.e., 
  \begin{eqnarray*}
    d(i,j) & = &  \norm{v_i - v_j}_2^2 \\
           & = &  (v_i - v_j)^T (v_i - v_j) \\
           & = & \norm{v_i}_2^2 - 2 v_i^Tv_j + \norm{v_j}_2^2. 
  \end{eqnarray*}
For $x \in \R^n$ with $\sum_{i=1}^n x_i = \alpha$,  $ x^TDx$ can be written as 
\begin{eqnarray*}
  x^TDx & = &  \sum_{1\leq i,j \leq n} x_ix_j \left( \norm{v_i}_2^2 - 2 v_i^Tv_j + \norm{v_j}_2^2 \right)\\
        & = &   \sum_{j=1}^n x_j \sum_{i=1}^n (x_i \norm{v_i}_2^2 )
          - 2 \sum_{1\leq i,j\leq n} x_i (v_i^T v_j) x_j
          + \sum_{i=1}^n x_i \sum_{j=1}^n (x_j \norm{v_j}_2^2 )  \\
       & = & 2 \alpha c^T x - 2 \cdot x^T Q x,
\end{eqnarray*}
where $Q \in \R^{n\times n}$  is the positive semidefinite matrix 
with $Q_{ij} = v_i^Tv_j$, hence $x^TQx$ is convex, and $c \in \R^n$ is the vector $c^T = (\norm{v_1}_2^2,\dots,\norm{v_n}_2^2)$. Thus $x^T D x$ is concave on the domain $\{x\in \R^n \colon \sum_{i=1}^n x_i = \alpha\}$.
\end{proof}

\begin{remark}
Notice that the matrix $Q$ and the  vector $c$ can easily be derived
from the distances. For this observe that if
$v_1,\dots,v_n \in \R^t$ is an embedding of $\sqrt{d}$,
then $v_1-u, v_2-u,\dots,v_n-u$ is another such embedding,
for each $u \in \R^t$.
Hence, we can assume $v_1 = \mathbf{0}$, which implies
$c_i=\norm{v_i}_2^2 = d(1,i)$ and thus
$Q_{ij} = \frac{1}{2} \left(d(1,i) + d(1,j) - d(i,j)\right)$.
Hence, the embedding does not need to be known to describe $Q$ and $c$. 
\end{remark}

Using Lemma~\ref{lem:sliceConvex}, we can efficiently determine
a relaxed solution for $\MSD$ on distance spaces of negative type
for a wide class of constraints,
by solving a family of convex problems, one for each possible
$\ell_1$ norm of the solution vector.
There is a rich set of algorithms for convex optimization
problems as we encounter here. In the following theorem, for simplicity,
we focus on consequences stemming from the ellipsoid algorithm.
The ellipsoid algorithm has the advantage that it only needs a separation
oracle, and often allows us to obtain an optimal solution without
any error. As a technical requirement, we need that the coefficients
of the underlying linear constraints have small encoding lengths,
which holds for most natural constraints.

\begin{theorem}
  \label{thr:2}
  Consider the max-sum dispersion problem with general linear constraints, for which the separation problem can be solved in polynomial time

  \begin{equation}
    \label{eq:2}
    \begin{array}{ll}
      \text{maximize} &   x^T D x\\
      \\
      \text{subject to} & Ax \leq b \\
      \\
                      & x_i \in \{0,1\} \, \text{ for }\,i \in X. 
    \end{array}
  \end{equation}

  If $d(\cdot,\cdot)$ is of negative type, then one can compute a fractional point $x^* \in [0,1]^n$ satisfying $Ax \leq b$ with ${x^*}^T D x^* \geq OPT$ in time polynomial in the input and the maximal binary encoding length of any coefficient or right-hand side of $Ax \leq b$. 
\end{theorem}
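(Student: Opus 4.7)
The plan is to use Lemma \ref{lem:sliceConvex} by slicing the feasible region into hyperplanes of constant $\ell_1$-norm and solving one convex problem on each slice. Since the optimal integer solution $x_{OPT}$ lies in $\{0,1\}^n$, its coordinates sum to some integer $k^* \in \{0,1,\dots,n\}$. Hence it suffices, for every $k \in \{0,1,\dots,n\}$, to compute
\[
x^*_k \in \arg\max\Bigl\{ x^T D x \;:\; Ax \leq b,\; \textstyle\sum_{i=1}^n x_i = k,\; x \in [0,1]^n \Bigr\},
\]
and return the $x^*_k$ maximizing the objective. Correctness is immediate: on the slice $k=k^*$, the vector $x_{OPT}$ is feasible for the relaxation, so $(x^*_{k^*})^T D x^*_{k^*} \geq x_{OPT}^T D x_{OPT} = OPT$, and adding the box constraints $x \in [0,1]^n$ does not remove $x_{OPT}$.

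I would then argue that each slice problem is solvable in polynomial time. By Lemma \ref{lem:sliceConvex}, together with the remark following it, on the hyperplane $\sum_i x_i = k$ the objective can be written explicitly as $x^T D x = 2k\, c^T x - 2\, x^T Q x$, where $Q \succeq 0$ and $c$ are computed directly from $D$ via $Q_{ij} = \tfrac{1}{2}(d(1,i)+d(1,j)-d(i,j))$ and $c_i = d(1,i)$; in particular, we never need to construct the $\ell_2$-embedding. Maximizing this concave quadratic over the polyhedron $\{Ax\leq b\} \cap \{\sum_i x_i = k\} \cap [0,1]^n$ is equivalent, after negation, to a convex quadratic program. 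The separation oracle for $Ax \leq b$ is assumed to run in polynomial time, and adding a single equation and $2n$ box constraints (all with encoding length $O(1)$) preserves polynomial-time separability. Hence the ellipsoid-based algorithms of Khachiyan~\cite{Khachiyan79} and Kozlov, Tarasov and Khachiyan~\cite{kozlov1980polynomial} cited in the preliminaries solve each slice problem in time polynomial in the encoding of $D$ and of the coefficients in $A$, $b$, as claimed.

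The main conceptual obstacle, already overcome by Lemma~\ref{lem:sliceConvex}, is the non-concavity of $x \mapsto x^T D x$ on the full relaxation; the proof of Theorem~\ref{thr:2} itself is essentially just the observation that an optimizer is attained on one of $n+1$ cardinality slices, each of which admits a genuinely convex formulation. The only other point that needs checking is that the optimum over a slice is actually attained and can be computed exactly by the ellipsoid method for convex quadratic programming under an oracle model; this follows from the standard guarantees cited in the preliminaries, since the coefficients defining $Q$ and $c$ have encoding length polynomial in that of $D$, and the polyhedral domain inherits polynomial-length bounds from $Ax \leq b$.
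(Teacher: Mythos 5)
Your proposal is correct and follows essentially the same route as the paper: slice the relaxation by the value of $\sum_i x_i$, rewrite the objective on each slice as $2\alpha\, c^T x - 2\, x^T Q x$ via Lemma~\ref{lem:sliceConvex}, solve each resulting convex quadratic program with the ellipsoid method under the separation oracle, and return the best of the slice optima, which dominates $OPT$ since the integer optimum lies on one of the slices. The only (harmless) difference is that you also include the trivial slice $k=0$, whereas the paper ranges over $\alpha \in \{1,\dots,n\}$.
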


Before proving the theorem, we briefly discuss the above-mentioned dependence
of the running time on the encoding length. Notice
that if $A x \leq b$ is given explicitly,
then the claimed point $x^*$ is always obtained in polynomial time because the
encoding length of $A x \leq b$ is part of the input. Similarly,
Theorem~\ref{thr:2} implies that we can obtain $x^*$ efficiently for matroid polytopes,
since the inequality-description mentioned in Section~\ref{sec:preliminaries-1}
has only $\{0,1\}$-coefficients
and right-hand sides within $\{1,\dots, n\}$.
We highlight that our techniques can often be used even if the encoding length
condition is not fulfilled by accepting a small additive error.

\begin{proof}
  Consider the constraints $\sum_{i=1}^n x_i = \alpha$
  for $\alpha \in \{1,\dots,n\}$. Using the notation of Lemma~\ref{lem:sliceConvex}, we can re-write the objective function as 
$2 \alpha c^T x - 2 \cdot x^T Q x$ if the constraint $\sum_{i=1}^n x_i = \alpha$ is added to the constraints $Ax \leq b$. 
Using the ellipsoid method, we 
solve each of the following  $n$
convex quadratic programming problems  that are parameterized by $\alpha \in \{1,\dots,n\}$ 
   \begin{equation}
    \label{eq:3}
    \begin{array}{ll}
      \text{minimize} &     2 \cdot x^T Q x - 2 \alpha c^T x\\
      \\
      \text{subject to} & Ax \leq b \\
      \\
                        & \sum_{i=1}^n x_i = \alpha\\
                       \\
                      & 0 \leq x_i \leq 1, \text{ for }\,i \in X, 
    \end{array}
  \end{equation}
with optimum solutions $x^{*,1},\dots,x^{*,n}$ respectively;
see~\cite{kozlov1980polynomial,GroetschelLovaszSchrijver88} for
details on why an optimal solution can be obtained (without any additive error
which is typical for many convex optimization techniques). 
Since each feasible $\bar{x} \in \{0,1\}^n$
satisfies one of these constraints, $x^*$ being one of these solutions with largest objective function value is a point satisfying the claim. Clearly, $x^*$ can be computed in polynomial time. 
\end{proof}

\begin{remark}
We notice that for very simple constraints, like a cardinality constraint
where at most $k$ elements can be picked,
a randomized rounding approach \cite{raghavan1987randomized} can  now  be  employed. More precisely,
when working with a cardinality constraint one can simply scale
down the fractional solution $x^*$ satisfying $\sum_{i=1}^n x_i^*=k$
by a $(1-\epsilon)$-factor to obtain $y=(1-\epsilon) x^*$,
and then round each component of $y$ independently.
Independent rounding preserves the objective
in expectation, and the number of picked elements is $(1-\epsilon) k$
in expectation and sharply concentrates around this value due
to Chernoff-type concentration bounds. However, this simple rounding
approach fails for more interesting constraint families like
matroid constraints.
\end{remark}

\section{Negative type MSD under matroid constraint}
Consider the MSD problem for the case that the distance space is of negative type and $Ax\leq b$ is a matroid constraint. The main result of this section is the following theorem which immediately implies Theorem~\ref{thr:5}. 
\begin{theorem}
  \label{thr:4}
  There exists a deterministic algorithm for the $\MSD$
  problem in distance spaces of negative type with a matroid
  constraint, which outputs in polynomial time a basis $B$
  with
  $\left({\chi^B}\right)^T D \chi^B \geq \left(1-c\frac{\log
      k}{k}\right)\OPT$, where $k$
  is the rank of the matroid and $c$ is an absolute constant.
\end{theorem}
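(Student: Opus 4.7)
My plan is to combine the fractional solution from Theorem~\ref{thr:2} with a matroid-respecting deterministic rounding procedure, and to control the rounding loss using the $\ell_2$-embedding provided by Theorem~\ref{thr:1}.

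First, invoke Theorem~\ref{thr:2} on the matroid polytope to obtain a fractional $x^* \in P(\M)$ with ${x^*}^T D x^* \geq \OPT$. Because $\nabla f(x)=2Dx\geq 0$ on $\R^n_{\geq 0}$, one may move $x^*$ to the base polytope, that is, assume $\|x^*\|_1=k$. By Theorem~\ref{thr:1} fix an embedding $v_1,\dots,v_n \in \R^t$ with $d(i,j)=\|v_i-v_j\|_2^2$, and shift the origin so that $\sum_i x^*_i v_i = 0$; this is legitimate because the problem depends only on the pairwise distances. With this centering, the calculation of Lemma~\ref{lem:sliceConvex} gives, on the base polytope,
\[
  f(x^*) \;=\; 2k\sum_i x^*_i\|v_i\|_2^2, \qquad f(\chi^B) \;=\; 2k\sum_{i\in B}\|v_i\|_2^2 \;-\; 2\Bigl\|\sum_{i\in B}v_i\Bigr\|_2^2.
\]

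Next, apply the swap rounding of Chekuri, Vondr\'ak and Zenklusen to produce a random basis $B$ with $\E[\chi^B]=x^*$, pairwise negative correlation, and Chernoff-type concentration for sums of non-negative coefficients. Taking expectations and using the centering, the linear parts of $f(x^*)$ and $\E[f(\chi^B)]$ agree, so $\E[f(\chi^B)] = f(x^*) - 2\,\E\|\sum_{i\in B}v_i\|_2^2$. The core of the proof is then to establish
\[
  \E\Bigl\|\sum_{i\in B}v_i\Bigr\|_2^2 \;\leq\; c_1 \log k \cdot \sum_i x^*_i\|v_i\|_2^2
\]
for some absolute constant $c_1$, from which the claimed ratio $(1-c\log k/k)$ follows by dividing by $f(x^*)/(2k) = \sum_i x^*_i\|v_i\|_2^2$. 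To obtain this bound I plan to combine coordinate-wise Chernoff concentration for $\sum_i v_i^j\chi^B_i$ (splitting each embedding coordinate into its positive and negative parts so that standard Chernoff bounds for swap rounding apply) with a dimension-reduction step: a Johnson--Lindenstrauss-type random projection onto $O(\log k)$ directions reduces the squared deviation to a union bound over only logarithmically many coordinates, each controlled at the variance scale. The basis is then obtained deterministically by a conditional-expectations derandomization along the swap-rounding process.

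The main obstacle is the variance-style bound in the third paragraph. Negative correlation of the indicators $\chi^B_i$ does not on its own upper-bound the variance of a signed linear combination, so the concentration argument has to marry coordinate-wise Chernoff tail bounds with dimension reduction, or else be replaced by a martingale concentration inequality for the vector-valued quantity $\sum_{i\in B} v_i$ along the exchange steps of swap rounding. For ranks $k$ so small that $c\log k/k \geq \epsilon$, brute-force enumeration over the $O(n^k)$ bases runs in polynomial time and completes the PTAS claim of Theorem~\ref{thr:5}.
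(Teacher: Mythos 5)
Your reduction of the problem to the inequality $\E\bigl\|\sum_{i\in B}v_i\bigr\|_2^2 \leq c_1\log k\cdot\sum_i x_i^*\|v_i\|_2^2$ is a correct reformulation (the algebra with the centered embedding is right, and dividing by $f(x^*)/2=k\sum_i x_i^*\|v_i\|_2^2$ would indeed give the claimed ratio), but that inequality is exactly where the proof is missing, and the routes you sketch do not plausibly close it. Since $\E[\chi^B]=x^*$ and the embedding is centered, the quantity to bound is $\sum_j \mathrm{Var}\bigl(\sum_i v_i^j\chi_i^B\bigr)$, a sum of variances of \emph{signed} linear forms; as you yourself note, the pairwise negative correlation of swap rounding gives no upper bound here (the cross terms between positive and negative coefficients have the wrong sign). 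The coordinate-wise Chernoff idea does not rescue this: the known Chernoff-type bounds for swap rounding require coefficients in $[0,1]$, so after rescaling by $\max_i|v_i^j|$ they yield deviations governed by $\max_i|v_i^j|\cdot\sum_i|v_i^j|x_i^*$, which can exceed the needed variance proxy $\sum_i (v_i^j)^2x_i^*$ by polynomial factors. The Johnson--Lindenstrauss step is also misplaced: the expectation over the rounding randomness is already a sum of per-coordinate variances, so projection buys nothing, and a per-realization JL guarantee would require a union bound over up to $\binom{n}{k}$ bases, i.e.\ $\Omega(k\log n)$ dimensions, not $O(\log k)$. Whether swap rounding satisfies a Bernstein-type bound with the $\ell_2$-weighted variance proxy you need is not established anywhere in your argument, and the paper explicitly remarks that the existing concentration results for swap rounding with polynomial objectives (Makarychev--Schudy--Sviridenko) are not strong enough for this purpose. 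The final derandomization is also only asserted: conditional expectations of a quadratic function along the swap-rounding process require conditional pairwise joint marginals that you have not shown how to compute; and note that $x^TDx$ is \emph{concave} along every swap direction $e_i-e_j$ (the coefficient of $\epsilon^2$ is $-2d(i,j)\leq 0$), so a pipage-style ``one of the two branches does not decrease the objective'' argument fails and some quantitative loss bound is unavoidable.

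For contrast, the paper's proof avoids concentration entirely. It rounds $x^*$ deterministically on the minimal face of $P(\M)$ containing it, maintaining a chain of tight sets whose differences (``rings'') partition the ground set; in each iteration it picks, among fractional pairs inside a single ring, the pair $i,j$ minimizing $x_ix_jd(i,j)$, moves mass along $e_i-e_j$ until a new tight set appears, and charges the iteration a loss of at most $2x_ix_jd(i,j)$. The negative-type inequality of Lemma~\ref{lemmaIneq}, applied to the partition into rings, shows the loss when $m$ iterations remain is at most $\min\{2/(mk),\,2/m^2\}\cdot (x^*)^TDx^*$, and summing gives the $\bigl(4+2\ln k\bigr)/k$ total loss. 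If you want to pursue your randomized route, the concrete task is to prove (or refute) the variance bound for swap rounding with signed coefficients; as it stands, the core of Theorem~\ref{thr:4} is unproven in your write-up.
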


We solve the relaxation to obtain an optimal fractional point over the matroid polytope, as in Theorem \ref{thr:2}, and perform a deterministic rounding algorithm.
The suggested rounding procedure has similarities with pipage
rounding for matroid polytopes (see \cite{calinescu_2011_maximizing},
which is based on work in~\cite{ageev_2004_pipage})
and swap rounding~\cite{chekuri_2010_dependent}, in the sense that
it iteratively changes at most two components of the fractional
point until an integral point is obtained. However contrary to
these previous procedures we need to 
judiciously choose the two coordinates.
Also our analysis differs substantially from the above-mentioned
prior rounding procedures on matroids, since we deal with a quadratic
objective function were we must accept a certain loss in the
objective value due to rounding, because there is a strictly
positive integrality gap.
(Pipage rounding and swap rounding are typically applied in settings
where the objective function is preserved in expectation.)
Makarychev, Schudy, and Sviridenko~\cite{makarychev_2015_concentration}
build up on the swap rounding procedure and show how to obtain
concentration bounds for polynomial objective functions.
Their concentration results apply to general polynomial objective
functions with coefficients in $[0,1]$; however,
they are not strong enough for our purposes.

Our deterministic rounding algorithm exploits the fact that we are dealing with negative type distance spaces, and shows that only a very small loss in the objective value is necessary to obtain an integral solution. In order to bound this loss we use a very general inequality, stemming from the definition of distance spaces of negative type, that compares the (fractional) dispersion of two sets to that of its union. Given a vector $x\in \R^n$ and a set $S\subseteq\{1,\cdots,n\}$, we define the restricted vector $x^S$ as $x^S_i=x_i$ if $i\in S$, and 0 otherwise.

\begin{lemma}\label{lemmaIneq}
Let $D\in\R^{n\times n}$ be the matrix representing a negative type distance space. Given a vector $x\in \R^n_{\geq 0}$ of coefficients, and two disjoint sets $A,B\subseteq [n]$ such that $\|x^A\|_1>0$ and $\|x^B\|_1>0$, we have 
\begin{displaymath}
\frac{\left({x^{A\cup B}}\right)^T D x^{A\cup B}}{\|x^{A\cup B}\|_1}
\geq \frac{\left({x^A}\right)^T D x^A}{\|x^A\|_1}
+ \frac{\left({x^B}\right)^T D x^B}{\|x^B\|_1}.
\end{displaymath}
\end{lemma}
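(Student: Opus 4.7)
The plan is to apply the defining inequality of negative type directly, with a carefully chosen sign vector $(b_i)$. Write $\alpha = \|x^A\|_1$ and $\beta = \|x^B\|_1$, both strictly positive by hypothesis. Define the coefficients
\[
b_i = \begin{cases} x_i/\alpha & \text{if } i \in A, \\ -x_i/\beta & \text{if } i \in B, \\ 0 & \text{otherwise.} \end{cases}
\]
Then $\sum_i b_i = 1 - 1 = 0$, so the negative type inequality \eqref{def:NT} gives $\sum_{i,j} b_i b_j d(i,j) \leq 0$.

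Next I would expand this sum by splitting the index pairs into the three pieces $A\times A$, $B\times B$, and the two mixed parts $A\times B$ and $B\times A$. The $A\times A$ contribution is $(x^A)^T D x^A / \alpha^2$, the $B\times B$ contribution is $(x^B)^T D x^B/\beta^2$, and the mixed parts combine into $-\frac{2}{\alpha\beta}\sum_{i\in A,\, j\in B} x_i x_j d(i,j)$. Hence the negative type inequality reads
\[
\frac{(x^A)^T D x^A}{\alpha^2} + \frac{(x^B)^T D x^B}{\beta^2} \;\leq\; \frac{2}{\alpha\beta}\sum_{i\in A,\, j\in B} x_i x_j d(i,j).
\]

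The final step is purely algebraic. Using $A$ and $B$ disjoint, expand
\[
(x^{A\cup B})^T D x^{A\cup B} = (x^A)^T D x^A + (x^B)^T D x^B + 2\sum_{i\in A,\, j\in B} x_i x_j d(i,j),
\]
solve for the mixed sum, and substitute into the displayed inequality above. After multiplying through by $\alpha\beta$ and rearranging, the terms group as $(\alpha+\beta)(x^A)^T D x^A/\alpha + (\alpha+\beta)(x^B)^T D x^B/\beta \leq (x^{A\cup B})^T D x^{A\cup B}$, which upon dividing by $\alpha+\beta = \|x^{A\cup B}\|_1$ yields exactly the claimed inequality.

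I do not anticipate any real obstacle here: once the correct $b_i$ are chosen, the proof is a two-line manipulation. The only creative step is guessing those coefficients, which is natural in hindsight since one wants the positive and negative masses on $A$ and $B$ to be separately normalized so that the quadratic form on each side produces the ratios $(x^A)^T D x^A/\alpha$ and $(x^B)^T D x^B/\beta$ appearing in the target inequality. A less elegant alternative would be to invoke Schoenberg's theorem (Theorem~\ref{thr:1}) to get an embedding $v_1,\dots,v_n$ with $d(i,j)=\|v_i-v_j\|_2^2$, rewrite each $(x^S)^T D x^S$ as $2\|x^S\|_1 \sum_{i\in S} x_i\|v_i\|^2 - 2\|\sum_{i\in S} x_i v_i\|^2$, and observe that the inequality reduces to $\|V_A\|^2/\alpha + \|V_B\|^2/\beta \geq \|V_A+V_B\|^2/(\alpha+\beta)$ with $V_S := \sum_{i\in S} x_i v_i$, which is the standard Cauchy--Schwarz-type identity $\|\beta V_A - \alpha V_B\|^2 \geq 0$. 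Both routes work, but the direct negative type argument above is cleaner.
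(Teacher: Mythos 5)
Your main argument is correct and is essentially identical to the paper's proof: the paper also takes $b=\frac{x^A}{\|x^A\|_1}-\frac{x^B}{\|x^B\|_1}$, applies the negative type inequality $b^TDb\leq 0$, and rearranges using $(x^{A\cup B})^TDx^{A\cup B}=(x^A)^TDx^A+(x^B)^TDx^B+2(x^A)^TDx^B$ and $\|x^{A\cup B}\|_1=\|x^A\|_1+\|x^B\|_1$. (Your embedding-based alternative is also valid, but the direct route is the one the paper uses.)
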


\begin{proof}
Define the vector $b\in \R^n$ as $b=\frac{x^A}{\|x^A\|_1} - \frac{x^B}{\|x^B\|_1}$. Since $\sum_{i=1}^n b_i = 0$, the inequality $b^T D b\leq 0$ is of negative type. Expanding it yields

$$ 0\geq  \left(\frac{x^A}{\|x^A\|_1} - \frac{x^B}{\|x^B\|_1}\right)^T D \left(\frac{x^A}{\|x^A\|_1} - \frac{x^B}{\|x^B\|_1}\right)= \frac{\left({x^A}\right)^T D x^A}{\|x^A\|_1^2} + \frac{\left({x^B}\right)^T D x^B}{\|x^B\|_1^2} - 2\frac{ \left({x^A}\right)^T D x^B}{\|x^A\|_1 \|x^B\|_1}.$$
Hence $2 \left({x^A}\right)^T D x^B \geq \frac{\|x^B\|_1}{\|x^A\|_1}
\left({x^A}\right)^T D x^A + \frac{\|x^A\|_1}{\|x^B\|_1} \left({x^B}\right)^T D x^B.$ Finally,
\begin{align*}
{\left(x^{A\cup B}\right)}^T  D x^{A\cup B} = & \left(x^A + x^B\right)^T D (x^A + x^B) = \left({x^A}\right)^T D x^A + \left({x^B}\right)^T D x^B + 2 \left({x^A}\right)^T D x^B\\
 \geq & \left(\|x^A\|_1 + \|x^B\|_1\right) \left( \frac{\left({x^A}\right)^T D x^A}{\|x^A\|_1} + \frac{\left({x^B}\right)^T D x^B}{\|x^B\|_1}\right)\\
 = & \|x^{A\cup B}\|_1 \left( \frac{\left({x^A}\right)^T D x^A}{\|x^A\|_1} + \frac{\left({x^B}\right)^T D x^B}{\|x^B\|_1}\right).    
\end{align*} 
\end{proof}

Consider an MSD instance consisting of a distance space of negative type represented by a matrix $D\in\R^{n\times n}$, and a matroid $\M$ over the ground set $X=\{1,\ldots,n\}$ of rank $k$. We assume that one can efficiently decide whether a set $S\subseteq X$ is independent. So we apply Theorem \ref{thr:2} to find a fractional vector $x^*$ over the matroid polytope $P(\M)=\left\{x\in \R^n_{\geq 0} \colon \sum_{i \in A} x_i \leq r(A) \ \forall A\subseteq X \right\}$, with $\left({x^*}\right)^T D x^* \geq \OPT$. Due to the monotonicity of the diversity function $x^T D x$, we can assume that $\|x^*\|_1=k$, i.e.,
$x^*$ is on the base polytope of the matroid $\M$.\footnote{Indeed, using standard techniques from matroid optimization (see~\cite[Volume B]{Schrijver03}), one can, for any point $y\in P(\M)$, determine a point $z\in P(\M)\cap \{x\in \R^n \colon \|x\|_1=k\}$ satisfying $z\geq y$ component-wise. Hence, if the fractional point $x^*$ we obtain is not on the base polytope, we can replace it efficiently with a point on the base polytope that dominates it and therefore has no worse objective value than $x^*$ due to monotonicity of the considered objective.}
We describe now a deterministic rounding algorithm that takes $x^*$ as input, and outputs in polynomial time a basis $B$ of $\M$ with $\left({\chi^B}\right)^T D \chi^B\geq \left(1-O\left(\frac{\log k}{k}\right)\right)\left({x^*}\right)^T D x^*\geq \left(1-O\left(\frac{\log k}{k}\right)\right)\OPT$.

In the remainder of the section, for any vector $x \in P(\M)$ we ignore the elements $i$ with $x_i=0$ and assume without loss of generality that $x$ has no zero components. We call an element $i\in X$ \emph{integral} or \emph{fractional} (with respect to $x$), respectively, if $x_i=1$ or $x_i<1$, and we call a set $S\subseteq X$ \emph{tight} or \emph{loose}, respectively, if $\|x^S\|_1=r(S)$ or $\|x^S\|_1<r(S)$. We will need the following result about faces of the matroid polytope, which is a well-known consequence of combinatorial uncrossing (see~\cite{giles_1975_submodular}, or~\cite[Section 44.6c in Volume B]{Schrijver03}).

\begin{lemma}\label{lemmaChain}
Let $x\in P(\M)$ with $x_i\neq 0$ for $i \in \{1,\dots, n\}$, and let
$\emptyset = S_0\subsetneq S_1 \subsetneq \cdots \subsetneq S_p=X$
be a (inclusion-wise) maximal chain of tight sets with respect to $x$,
i.e., $\sum_{i\in S_l} x_i = r(S_l)$ for $l\in \{1,\dots, p\}$.
Then the polytope $P(\M)\cap \{y\in \R^n_{\geq 0} \colon \sum_{i\in S_l} x_i = r(S_l) \text{ for } l\in \{1,\dots, p\}\}$ defines the minimal face of
$P(\M)$ that contains $x$.
(In other words, all other $x$-tight sets are implied by the ones in the chain.)
\end{lemma}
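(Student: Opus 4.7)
My plan is to invoke the classical technique of \emph{combinatorial uncrossing}. The first step is to show that the collection $\mathcal{L}$ of all $x$-tight sets is closed under union and intersection. The standard argument goes through: for any tight $A,B$, submodularity of the rank function gives $r(A\cup B)+r(A\cap B)\leq r(A)+r(B)$, while $x(A\cup B)+x(A\cap B)=x(A)+x(B)=r(A)+r(B)$ and the valid inequalities $x(A\cup B)\leq r(A\cup B)$, $x(A\cap B)\leq r(A\cap B)$ force both to be equalities, so $A\cup B,A\cap B\in\mathcal{L}$.

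Next, since $x_i>0$ for all $i$ by assumption, no non-negativity constraint is tight at $x$, and therefore the minimal face of $P(\M)$ containing $x$ is exactly $F^* := P(\M) \cap \{y\in\R^n : y(A) = r(A)\ \forall A\in\mathcal{L}\}$. Because $\{S_0,\dots,S_p\}\subseteq\mathcal{L}$, the face $F$ described in the lemma satisfies $F^*\subseteq F$ trivially; the content of the lemma is the reverse containment.

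The heart of the argument is the following structural claim: every $A\in\mathcal{L}$ is a disjoint union of the ``atoms'' $M_l := S_l\setminus S_{l-1}$, $l\in\{1,\dots,p\}$. Suppose otherwise; pick $l$ with $\emptyset\neq A\cap M_l\neq M_l$. Then $T := S_{l-1}\cup (A\cap S_l)$ lies in $\mathcal{L}$ (using that $\mathcal{L}$ is closed under intersection and union), and I claim $S_{l-1}\subsetneq T\subsetneq S_l$: the first strict containment holds because any element of $A\cap M_l$ lies in $T\setminus S_{l-1}$, and the second because any element of $M_l\setminus A$ lies in $S_l\setminus T$. This would extend the chain, contradicting its maximality. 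Consequently, for every $A\in\mathcal{L}$, the vector $\chi^A$ lies in the linear span of $\{\chi^{S_l}-\chi^{S_{l-1}} : l=1,\dots,p\}$.

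Given this linear dependency, finishing is purely linear-algebraic. Writing $\chi^A=\sum_{l=1}^p \lambda_l(\chi^{S_l}-\chi^{S_{l-1}})$ and pairing with $x$ itself, which satisfies $x(S_l)=r(S_l)$ for every $l$ and also $x(A)=r(A)$, yields $r(A)=\sum_l\lambda_l(r(S_l)-r(S_{l-1}))$. Pairing the same combination with an arbitrary $y\in F$ then gives $y(A)=\sum_l\lambda_l(y(S_l)-y(S_{l-1}))=\sum_l\lambda_l(r(S_l)-r(S_{l-1}))=r(A)$, so $y\in F^*$. I expect the main obstacle to be the verification that $T$ sits \emph{strictly} between $S_{l-1}$ and $S_l$ under the assumption that $A$ splits the atom $M_l$; the uncrossing lemma and the subsequent linear-algebra bookkeeping are then routine.
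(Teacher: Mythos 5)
Your proof is correct and follows exactly the route the paper has in mind: it cites this lemma as a standard consequence of combinatorial uncrossing (Giles; Schrijver, Vol.~B), and your argument—tight sets closed under union/intersection, maximality of the chain forcing every tight set to be a union of the atoms $S_l\setminus S_{l-1}$, and the linear-algebraic conclusion using that no constraint $y_i\geq 0$ is tight—is precisely that standard uncrossing proof, carried out in full.
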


Also, given a point $x\in P(\M)$, one can efficiently find a maximal
chain of tight sets as described in Lemma~\ref{lemmaChain}.
Our algorithm starts with such
a chain $\emptyset = S_0\subsetneq S_1 \subsetneq \cdots \subsetneq S_p=X$ for the vector $x^*$. For $1\leq l\leq p$ define the set $R_l=S_l \setminus S_{l-1}$; we call these sets \emph{rings}. The rings form a partition of $X$, their weights $\|\left({x^*}\right)^{R_l}\|_1=r(S_l)-r(S_{l-1})$ are strictly positive integers whose sum is $k$, and each ring $R_l$ either consists of a single integral element, or of at least 2 elements, all fractional. This is because whenever $i\in R_l$ is integral, the set $S_{l-1}\cup\{i\}$ is tight, so it can be added to the chain. We call the rings integral or fractional, accordingly. We start with $x=x^*$,
we iteratively change two coordinates of $x^*$ without
leaving the minimal face of the matroid polytope
on which $x^*$ lies; one coordinate will be increased
and the other one decreased by the same amount.

%When changing two coordinates of $x^*$, we do it in such a way
%that we hit a new tight constraint, either a non-negativity constraint
%or a matroid constraint defined by a new tight set $S$ not
%implied by the current chain. Using combinatorial uncrossing
%a new tight set $S$ can easily be transformed into a tight
%set that can be added to the current chain, leading to a chain
%that is by one set larger
%(see~\cite[Section 44.6c in Volume B]{Schrijver03}).
%%
%We will use this fact to maintain a efficiently
%maintain a chain of tight sets. 

\subsection*{The rounding procedure}

The rounding of $x^*$
proceeds in iterations, and stops when all elements are integral. 
 Among all pairs of fractional elements within the
same ring, select the pair $i,j$
that minimizes the term $x_i x_j d(i,j)$.
We perturb vector $x$ by adding to $x_i$
and subtracting from $x_j$
a certain quantity $\epsilon$.
The dispersion $x^TDx$ is linear in $\epsilon$
except for the term $2x_i x_j d(i,j)$,
hence we can select the sign of $\epsilon$
so that the value of $x^T Dx-2x_i x_j d(i,j)$
does not decrease. We assume without loss of generality that 
this choice is $\epsilon>0$, so $x_i$ is increasing and $x_j$ decreasing.
We increment $\epsilon$ until a new tight constraint appears. 
If the constraint corresponds to $x_j$ becoming zero, 
we erase that element and end the iteration step. 
Otherwise, a previously loose set $S\subseteq X$ becomes tight, 
and $S$ must contain $i$ but not $j$, 
else its weight $\|x^S\|_1$ would not increase during this process. 
If the ring containing $i$ and $j$ is $R_l=S_l\setminus S_{l-1}$, 
then the set $S'=(S\cup S_{l-1}) \cap S_l$ is also 
tight,\footnote{This follows from the uncrossing property: 
if $A$ and $B$ are tight sets then $A\cup B$ and $A\cap B$ are also tight. 
This property is a consequence of the submodularity of the matroid rank function.} 
and it also contains $i$ but not $j$, 
hence $S_{l_1}\subsetneq S' \subsetneq S_l$ (see Figure \ref{fig:1}).
We add $S'$ to the chain, update the list of rings, 
and end the iteration step.

\begin{figure}
  \centering
  \subfigure[The fractional ring containing the pair $i,j$ with minimal value $x^m_ix^m_id(i,j)$.]{\includegraphics[height=2.9cm]{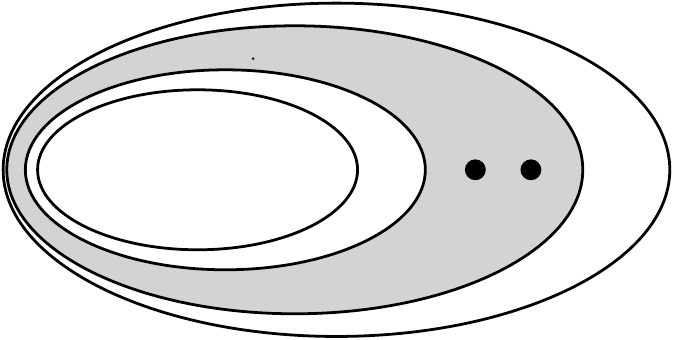}} \quad \quad \quad
  \subfigure[Elements $i$ and $j$ are separated by a new tight set that fits in the chain structure.]{\includegraphics[height=2.9cm]{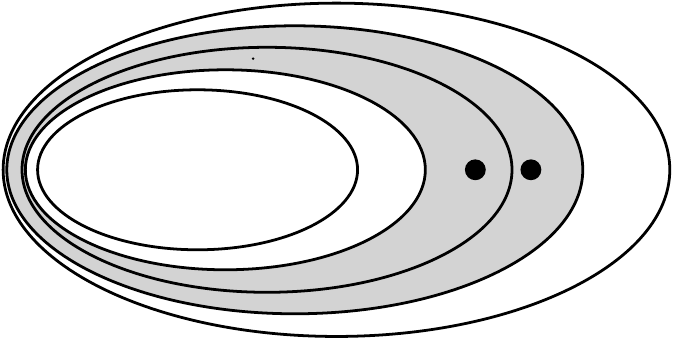}}
  \caption{The refinement of a fractional ring in an iteration of the rounding procedure. }
  \label{fig:1}
\end{figure}

We now argue that the above-described rounding procedure runs in
polynomial time and computes the characteristic vector $\chi^B$
of a basis $B$ of the matroid $\M$ with
  \begin{equation}
    \label{eq:4}
    \left({\chi^B}\right)^T D \chi^B \geq \left(1-c\frac{\log k}{k}\right) {x^*}^TDx^*     
  \end{equation}
where $k$ is the rank of the matroid and $c$ is an absolute constant.

\medskip 
\noindent 
At any stage of the algorithm, if $q$ is the number of fractional rings and $f$ is the number of fractional elements, the number of iterations remaining is at most $f-q$. This is because the value $f-q$ can never be negative, and it decreases in each iteration.  Either $f$ decreases, or $q$ increases, or $q$ decreases by 1 but $f$ decreases by at least 2 (any disappearing fractional ring has at least 2 fractional elements that become integral).  

Suppose there are $M$ iterations. We enumerate them in reverse order, and add a superscript $m$ to all variables to signify their value when there are $m$ iterations remaining. Hence $x^0= \chi^B$ is the integral output vector, $x^1$ is the vector at the beginning of the last iteration, and so on until $x^M=x^*$. Clearly, all vectors $x^m$ are in $P(\M)$, and their weights $\|x^m\|_1=k$ remain unchanged, so each $x^m$ is on the base polytope, and $x^0$ will thus be a characteristic vector of a basis in $\M$. From the previous claim we know that $m\leq f^m - q^m$, and in particular $M\leq n$, so the algorithm runs in polynomial time. For $1\leq m\leq M$, define $\loss^m=\left({x^m}\right)^T D x^m - \left({x^{m-1}}\right)^T D x^{m-1}$, so the total additive loss incurred in the rounding algorithm is $\sum_{m=1}^M \loss^m$. We postpone for a moment the proof of the following inequality.  

\begin{lemma}
  \label{lem:1}
  The loss in iteration $m$ is bounded by 
  \begin{displaymath}
    \loss^m \leq \min\left\{ \frac{2}{m\cdot k}, \frac{2}{m^2}\right\} {x^*}^TDx^*. 
  \end{displaymath}
\end{lemma}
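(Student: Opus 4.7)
The plan is to first reduce the single-iteration loss to the product $2 x_i^m x_j^m d(i,j)$ via the concavity of the quadratic perturbation, and then bound this product by combining the negative-type inequality (Lemma~\ref{lemmaIneq}) with combinatorial constraints on the fractional rings.

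First, I would view $L(\epsilon) := (x^m + \epsilon(e_i-e_j))^T D (x^m + \epsilon(e_i-e_j))$ as a function of $\epsilon$. Its $\epsilon^2$-coefficient equals $(e_i-e_j)^T D (e_i-e_j) = -2 d(i,j) \leq 0$, and the residue $L(\epsilon) - 2(x_i^m+\epsilon)(x_j^m-\epsilon) d(i,j)$ is affine in $\epsilon$ (the entire quadratic part is absorbed by the $(i,j)$-pair contribution). Since the algorithm picks the sign of $\epsilon$ so that this residue does not decrease, I would conclude
\[
  \loss^m \leq 2\bigl[x_i^m x_j^m - (x_i^m+\epsilon^*)(x_j^m-\epsilon^*)\bigr] d(i,j) \leq 2 x_i^m x_j^m d(i,j),
\]
using $(x_i^m+\epsilon^*)(x_j^m-\epsilon^*) \geq 0$.

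Next, because $(i,j)$ minimizes $x_a x_b d(a,b)$ over all intra-ring pairs, for every fractional ring $R$ the bound $x_i^m x_j^m d(i,j) \leq (x^R)^T D x^R/(r_R(r_R-1))$ holds by averaging over pairs in $R$. Averaging these inequalities across fractional rings with weights $w_R := r_R(r_R-1)/W_R$ would yield
\[
  x_i^m x_j^m d(i,j) \leq \frac{\sum_R (x^R)^T D x^R / W_R}{\sum_R r_R(r_R-1)/W_R}.
\]
Lemma~\ref{lemmaIneq} applied iteratively to the ring partition (integral rings contribute zero) bounds the numerator by $(x^m)^T D x^m/k$.

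For the denominator I would establish two complementary lower bounds, matching the two regimes. The key combinatorial input is $W_R \leq r_R - 1$ for every fractional ring, since the $r_R$ strictly fractional values in $R$ sum to the positive integer $W_R < r_R$. This gives $(r_R-1)/W_R \geq 1$, hence $\sum_R r_R(r_R-1)/W_R \geq \sum_R r_R = f^m \geq m$ (via $f^m - q^m \geq m$), which produces the $2/(mk)$ bound. For the $2/m^2$ bound, I would combine $r_R(r_R-1) \geq (r_R-1)^2$ with the Cauchy--Schwarz (Titu) inequality,
\[
  \sum_R \frac{(r_R-1)^2}{W_R} \geq \frac{\bigl(\sum_R (r_R-1)\bigr)^2}{\sum_R W_R} \geq \frac{m^2}{k},
\]
using $\sum_R W_R \leq k$. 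Putting everything together yields $\loss^m \leq 2 \min\bigl(1/(mk), 1/m^2\bigr)(x^m)^T D x^m$, and the bound against $(x^*)^T D x^*$ stated in the lemma follows from the multiplicative telescoping $(x^{m-1})^T D x^{m-1} \geq (1-c_m)(x^m)^T D x^m$ that will enter the subsequent proof of Theorem~\ref{thr:4}. The hard part will be identifying the weighting $w_R = r_R(r_R-1)/W_R$: it is what aligns the numerator with Lemma~\ref{lemmaIneq} while leaving a denominator whose two natural lower bounds recover exactly the two regimes of the lemma.
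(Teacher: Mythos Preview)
Your approach is essentially the paper's: bound $\loss^m \leq 2 x_i^m x_j^m d(i,j)$, combine minimality of this product over all intra-ring pairs with Lemma~\ref{lemmaIneq} applied to the ring partition, and then lower-bound $\sum_R |R|(|R|-1)/\|x^R\|_1$ in two ways (an elementary termwise comparison for the $2/(mk)$ regime, Cauchy--Schwarz/Titu for the $2/m^2$ regime). Your use of the integrality $W_R \leq r_R - 1$ in the first regime is a mild sharpening---you reach $\sum_R r_R = f^m$, whereas the paper only uses $W_R \leq r_R$ and reaches $\sum_R (r_R-1) = f^m - q^m$---but both quantities dominate $m$, so the conclusion is the same.

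The one step that does not go through as written is the final passage from $(x^m)^T D x^m$ to $(x^*)^T D x^*$. Your proposed multiplicative telescoping $(x^{m-1})^T D x^{m-1} \geq (1-c_m)(x^m)^T D x^m$ is a \emph{consequence} of $\loss^m \leq c_m (x^m)^T D x^m$, not a route to $(x^m)^T D x^m \leq (x^*)^T D x^*$; the telescoping runs in the wrong direction for that comparison. It would suffice to prove Theorem~\ref{thr:4} directly, but it does not establish the lemma as stated. The paper's fix is a single line: $x^*$ may be taken to be the optimum of the concave program on the base-polytope slice $\{\sum_i x_i = k\}$, and every iterate $x^m$ remains on that slice, so $(x^m)^T D x^m \leq (x^*)^T D x^*$ holds by optimality of $x^*$.
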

\noindent 
The total additive loss incurred by the algorithm is 
\begin{align*}
\left({x^*}\right)^T D x^* - \left({x^0}\right)^T D x^0 &=
  \sum_{m=1}^M \loss^m \leq \left({x^*}\right)^T D x^*\left(\sum_{m=1}^k \frac{2}{m\cdot k} +\sum_{m>k} \frac{2}{m^2}\ \right)\\
&\leq \left({x^*}\right)^T D x^*
  \cdot 2 \left(\frac{1+\ln k}{k} + \frac{1}{k}\right)
=
\left({x^*}\right)^T D x^* \cdot
  \left(\frac{4+2\ln k}{k}\right)\enspace,
\end{align*}
where the second inequality follows from
$\sum_{m=1}^k \frac{1}{m} \leq 1+\ln k$ and
$\sum_{m>k}\frac{1}{m^2} \leq \frac{1}{k}$.
In summary,
the algorithm finds a basis with dispersion
\begin{equation*}
\left({x^0}\right)^T D{x^0}\geq
\left({x^*}\right)^T D x^*\left(1-\frac{4+2\ln k}{k}\right)\geq \OPT\left(1-O\left(\frac{\log k}{k}\right)\right) 
\end{equation*}
which is our main result.

\begin{proof}[Proof of Lemma~\ref{lem:1}] 
 If the pair $i,j$ of fractional elements is chosen during the $m$-th iteration, then  $\loss^m \leq 2x^m_i x^m_j d(i,j)$. We 
bound this  term from above,  using the inequality in Lemma~\ref{lemmaIneq} multiple times over the vector $x^m$ and its partition into rings.  We skip the superscript $m$ to simplify notation and obtain 

$$\frac{x^T D x}{k} \geq \sum_{\text{ring } R} \frac{\left({x^R}\right)^T D x^R}{\|x^R\|_1}\geq \sum_{\text{fractional } R} \frac{\left({x^R}\right)^T D x^R}{\|x^R\|_1} \geq \sum_{\text{fractional } R} \frac{\binom{|R|}{2} \loss}{\|x^R\|_1}\enspace,$$
where the last inequality comes from $\loss^m \leq 2x^m_ix^m_j d(i,j)$, and the choice of elements $i$ and $j$ that minimizes this quantity over all pairs of elements within any fractional ring. We complete the above inequality in two ways. First, for every fractional ring $R$, $|R|\geq \|x^R\|_1$, so $\frac{\binom{|R|}{2}}{\|x^R\|_1}\geq \frac{|R|-1}{2}$ and thus
$$\frac{x^T D x}{k} \geq \frac{\loss}{2}\sum_{\text{frac. } R} (|R|-1)=\frac{\loss}{2}(f-q)\geq \frac{m}{2} \loss\enspace,$$
which implies
\begin{equation*}
\loss^m\leq \frac{2}{m\cdot k} \left({x^m}\right)^T D x^m \leq \frac{2}{m\cdot k} \left({x^*}\right)^T D x^*\enspace.
\end{equation*}

Next, $\binom{|R|}{2}\geq (|R|-1)^2/2$, and using a Cauchy-Schwarz inequality (for the third inequality below), we get:

\begin{align*}
x^T D x^T  \geq & \frac{k\cdot \loss}{2} \sum_{\text{frac. } R} \frac{(|R|-1)^2}{\|x^R\|_1} \geq \frac{\loss}{2} \left(\sum_{\text{frac. } R} \|x^R\|_1 \right)\left(\sum_{\text{frac. } R} \frac{(|R|-1)^2}{\|x^R\|_1}\right)\\
 \geq & \frac{\loss}{2} \left(\sum_{\text{frac.} R} (|R|-1) \right)^2 = \frac{\loss}{2} (f-q)^2\geq \frac{m^2}{2} \loss\enspace,
\end{align*}
and hence,
\begin{equation*}
\loss^m \leq \frac{2}{m^2} \left({x^m}\right)^T D x^m \leq \frac{2}{m^2} \left({x^*}\right)^T D x^*\enspace.
\end{equation*}

\end{proof}

\begin{remark}
The integrality gap of the convex program $\max\{x^T D x \ | \ x\in P(\M)\}$ above is at least $1-\frac{1}{k}$, which matches the approximation factor of our rounding algorithm up to a logarithmic term. Consider the matrix $D$ with $D_{i,j}=1$ for all $i\neq j$, which defines a distance space of negative type, and a cardinality constraint corresponding to the polytope $\left\{x\in [0,1]^n \ | \ \sum_i x_i = k\right\}$. An optimal solution is any $k$-set $B\subseteq X$, with value $\OPT=(\chi^B)^T D \chi^B=k(k-1)$; but the fractional vector $x^*=(k/n,\cdots,k/n)$ is feasible and has value $(x^*)^T D x^* = \frac{k^2}{n^2}n(n-1)$. Hence, $\frac{\OPT}{(x^*)^T D x^*}=\frac{k-1}{k}\frac{n}{n-1}\rightarrow 1-\frac{1}{k}$, as $n\rightarrow \infty$. 
\end{remark}

\begin{remark}
The previous approximation extends to the more general case of a combination of dispersion and linear scores, as follows. Consider the problem of maximizing the objective function $g(x)=x^T D x + w^T x$, with a matroid constraint of rank $k$, and where $D$ represents a distance space of negative type. The vector $w$ here corresponds to non-negative \emph{scores} on the elements of the ground set, and the objective is to find a feasible set with both high dispersion and high scores. The extra linear term does not change the concavity of $x^T D x$, so Lemma \ref{lem:sliceConvex} and Theorem \ref{thr:4} are valid for this problem and provide a fractional vector $x^*\in P(\M)$ with $g(x^*)\geq \OPT$. Moreover, $g(x)$ is still monotone, so we can assume that $\|x^*\|_1=k$. In each iteration of this section's rounding algorithm, $g(x)-2x_i x_j d(i,j)$ is linear in $\epsilon$, so we can bound the loss of value of $g(x)$ during this iteration by $2x_ix_jd(i,j)$, as before. Hence, the above analysis still holds and shows that the total loss is very small, even when comparing it only to the contribution of the quadratic term $(x^*)^T D x^*$ to the objective, thus ignoring the additional nonnegative term $w^T x^*$. Thus, we get the same approximation guarantee for this setting.
%Following a similar analysis as before, we can prove that $\loss^m\leq \frac{2}{m\cdot k} \left({x^m}\right)^T D x^m\leq \frac{2}{m\cdot k} g(x^m)\leq \frac{2}{m\cdot k} g(x^*)$, and $\loss^m\leq \frac{2}{m^2} \left({x^m}\right)^T D x^m\leq \frac{2}{m^2} g(x^m)\leq \frac{2}{m^2}g(x^*)$. The rest of the analysis is similar, and we obtain the same approximation factor of $\left(1-O\left(\frac{\log k}{k}\right)\right)$.
\end{remark}

To conclude, we prove that $\MSD$ remains $\classNP$-hard on distance spaces of negative type, even for a cardinality constraint. For this, we give a reduction from Densest $k$-Subgraph (D$k$S), which is $\classNP$-hard. An instance of D$k$S consists of a graph $G=(V,E)$ and a number $k\leq n=|V|$, and the object is to find a $k$-set $W\subseteq V$ whose induced subgraph $G[W]$ contains the largest number of edges. Now, the distance function $d':V^2\to \R_{\geq 0}$ defined by $d'(i,j)=2$ if $\{i,j\}\in E$, 1 if $\{i,j\}\not\in E$ is metric.\footnote{Any distance space, where the distance between distinct points is either 1 or 2, is metric.} Thus, by assertion~\ref{item:3}) (below Theorem~\ref{thr:1}), we conclude that the distance $d=(d')^{\log_2\frac{n}{n-1}}$ is of negative type, where $d(i,j)=1+\frac{1}{n-1}$ if $\{i,j\}\in E$, 1 otherwise. Finally, it is evident that an exact solution to the $\MSD$ instance $(V,d)$ with cardinality constraint $k$ corresponds to an exact solution to the D$k$S instance. This proves the next theorem. 

\begin{theorem}
  \label{thr:7}
  Max-sum dispersion $\MSD$
  is $\classNP$-hard,
  even for distance spaces of negative type and $Ax \leq b$
  representing one cardinality constraint.
\end{theorem}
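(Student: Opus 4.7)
The plan is to reduce from the NP-hard Densest-$k$-Subgraph problem (D$k$S). Given a D$k$S instance consisting of a graph $G=(V,E)$ with $n:=|V|$ and an integer $k\leq n$, I will construct in polynomial time an MSD instance on ground set $V$ equipped with a single cardinality constraint $\sum_{i\in V} x_i \leq k$ and a distance function of negative type, such that the optima of the two problems correspond.

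First I would introduce an auxiliary distance $d' : V\times V \to \R_{\geq 0}$ by setting $d'(i,j)=2$ if $\{i,j\}\in E$, $d'(i,j)=1$ for distinct non-adjacent $i,j$, and $d'(i,i)=0$. This $d'$ is automatically a metric: for any three distinct points $i,j,l$, each nonzero pairwise distance lies in $\{1,2\}$, so $d'(i,l)\leq 2\leq d'(i,j)+d'(j,l)$. To lift $d'$ to a negative-type distance I would apply assertion~iii) following Theorem~\ref{thr:1}, defining
\[
d(i,j) := d'(i,j)^{\log_2\!\left(\tfrac{n}{n-1}\right)}.
\]
Plugging in the two nontrivial values of $d'$ yields $d(i,j)=n/(n-1)=1+1/(n-1)$ on edges and $d(i,j)=1$ on non-edges, both rational with $O(\log n)$-bit encoding, so the reduction runs in polynomial time.

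The last step is to check that the MSD objective is an increasing affine function of the edge count. For any $W\subseteq V$ with $|W|=k$, splitting the sum over pairs according to adjacency in $G$ gives
\[
\sum_{\{i,j\}\subseteq W} d(i,j) \;=\; \binom{k}{2} + \frac{|E(G[W])|}{n-1}.
\]
Hence a $k$-set is MSD-optimal if and only if it induces the densest $k$-vertex subgraph of $G$, and by monotonicity of dispersion the relaxed constraint $\sum_i x_i\leq k$ is tight at optimum, so this exact correspondence transfers. Since D$k$S is NP-hard, so is MSD with one cardinality constraint over negative-type distances, giving Theorem~\ref{thr:7}.

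There is essentially no nontrivial obstacle here once assertion~iii) is available; the only point worth care is verifying that the particular exponent $\log_2(n/(n-1))$ collapses the two values of $d'$ to two simple rationals with polynomial encoding, so that the reduction is genuinely polynomial and no irrational numbers need to be represented in the MSD instance.
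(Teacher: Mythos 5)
Your reduction is correct and essentially identical to the paper's own proof: the same $1$--$2$ metric $d'$ derived from the D$k$S instance, the same exponent $\log_2\left(\tfrac{n}{n-1}\right)$, and the same exact correspondence via $\binom{k}{2}+|E(G[W])|/(n-1)$. One small correction: the step converting the metric $d'$ into a negative-type distance is justified by assertion~i) (the Deza--Maehara metric-transform result), not assertion~iii), since the latter requires $d'$ to already be of negative type, which a $1$--$2$ metric need not be.
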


\small

\end{document}